\documentclass[conference,letterpaper]{IEEEtran}


%
%
\usepackage[utf8]{inputenc} 
\usepackage[T1]{fontenc}
\usepackage{url}
\usepackage{ifthen}
\usepackage{cite}
\usepackage[cmex10]{amsmath} 
                          

\usepackage{graphicx}
\usepackage{bm}

\usepackage{nidanfloat}
\usepackage{subcaption}

\usepackage{tikz}
\usetikzlibrary{matrix}

\usetikzlibrary{arrows}

\usepackage{amssymb}   
\usepackage{amsthm}

\newcommand{\argmax}{\operatornamewithlimits{argmax}} 
\newcommand{\Inf}{\operatorname{Inf}}
\newcommand{\Var}{\operatorname{Var}}  
\newcommand{\Maj}{\operatorname{Maj}}

\newtheorem{theorem}{Theorem}
\newtheorem{restate}{Theorem}
\newtheorem{proposition}{Proposition}
\newtheorem{lemma}{Lemma}
\newtheorem{corollary}{Corollary}

\theoremstyle{definition}
\newtheorem{example}{Example}

\theoremstyle{definition}
\newtheorem*{notation}{Notation}

\theoremstyle{definition}
\newtheorem{definition}{Definition}

\theoremstyle{definition}
\newtheorem*{remark}{Remark}

\newtheorem{hypothesis}{Hypothesis}


\interdisplaylinepenalty=2500 

\hyphenation{op-tical net-works semi-conduc-tor}

\begin{document}
\title{\vspace{-4pt}The Computational Wiretap Channel} 


                    

\author{%
   \IEEEauthorblockN{Rafael G.L. D'Oliveira\IEEEauthorrefmark{1},
                     Salim El Rouayheb\IEEEauthorrefmark{3},
                     and Muriel Médard\IEEEauthorrefmark{4}}
   
   \IEEEauthorblockA{\IEEEauthorrefmark{1}%
                     Department of Electrical and Computer Engineering,
                     Rutgers University,
                     rafael.doliveira@rutgers.edu}
                     
   \IEEEauthorblockA{\IEEEauthorrefmark{3}%
                     Department of Electrical and Computer Engineering,
                     Rutgers University,
                     salim.elrouayheb@rutgers.edu}
   \IEEEauthorblockA{\IEEEauthorrefmark{4}%
                     Engineering, Electrical Engineering and Computer Science, Massachusetts Institute of Technology,
                     medard@mit.edu }
 }

\maketitle

\begin{abstract}
  We present the computational wiretap channel: Alice has some data $x$ and wants to share some computation $h(x)$ with Bob. To do this, she sends $f(x)$, where $f$ is some sufficient statistic for $h$. An eavesdropper, Eve, is interested in computing another function $g(x)$. We show that, under some conditions on $f$ and $g$, this channel can be approximated, from Eve's point of view, by the classic Wyner wiretap channel.
\end{abstract}

\section{Introduction}

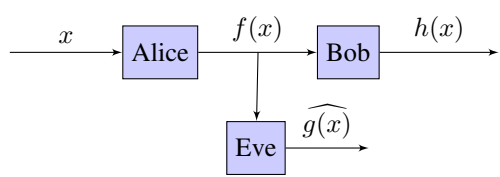
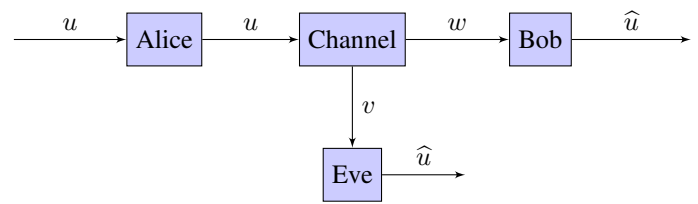
\begin{figure*}[!b]
    \begin{subfigure}[b]{0.5\linewidth}
    \centering

\tikzstyle{int}=[draw, fill=blue!20, minimum size=2em]
\tikzstyle{init} = [pin edge={to-,thin,black}]

\begin{tikzpicture}[node distance=2.5cm,auto,>=latex']
    \node [int] (a) {Alice};
    \node (b) [left of=a,node distance=2cm, coordinate] {a};
    \node [int] (c) [right of=a] {Bob};
    \node  (e) [right of=a, node distance=1.3cm] {};
    \node [coordinate] (end) [right of=c, node distance=2cm]{};
    \node [int] (d) [below right of=a,node distance=1.8cm] {Eve};
    \node [coordinate] (end2) [right of=d, node distance=1.5cm]{};
    \path[->] (b) edge node {$x$} (a);
    \path[->] (a) edge node {$f(x)$} (c);
    \path[->] (e.center) edge node {} (d);
    \draw[->] (c) edge node {$h(x)$} (end) ;
    \draw[->] (d) edge node {$\widehat{g(x)}$} (end2) ;
\end{tikzpicture}
\vspace{2ex}
\caption{The Computational Wiretap Channel: Alice sends a computation $f(x)$ of her data, where $f$ is a sufficient statistic for $h$. Thus, Bob is able to retrieve $h(x)$ perfectly and Eve has an estimate $\widehat{g(x)}$ of $g(x)$.} \label{fig:comp} 

    \vspace{3.5ex}
  \end{subfigure} \hspace{5pt}
  \begin{subfigure}[b]{0.5\linewidth}
    \centering

\tikzstyle{int}=[draw, fill=blue!20, minimum size=2em]
\tikzstyle{init} = [pin edge={to-,thin,black}]

\begin{tikzpicture}[node distance=2.5cm,auto,>=latex']
    \node [int] (a) {Alice};
    \node (b) [left of=a,node distance=2cm, coordinate] {a};
    \node [int] (c) [right of=a] {Channel};
    \node [int] (d) [right of=c] {Bob};
    \node [coordinate] (end1) [right of=d, node distance=2cm]{};
    
    \node [int] (e) [below of=c,node distance=1.8cm] {Eve};
    \node [coordinate] (end2) [right of=e, node distance=1.5cm]{};
    
    \path[->] (b) edge node {$u$} (a);
    \path[->] (a) edge node {$u$} (c);
    \path[->] (c) edge node {$w$} (d);
    \draw[->] (d) edge node {$\widehat{u}$} (end1);
    
    \path[->] (c) edge node {$v$} (e);
    \draw[->] (e) edge node {$\widehat{u}$} (end2);
    
\end{tikzpicture}
\caption{The Classic Wiretap Channel: Alice sends a message $u \in U$ through a channel which outputs $w \in W$ to Bob and $v \in V$ to Eve with probability $\Pr(\bm{w}=w,\bm{v}=v | \bm{u}=u)$. Bob and Eve estimate $u$.} \label{fig:genwire}
    \vspace{4ex}
  \end{subfigure} 
  
  \caption{Computational vs.\ classic wiretap channel. Our main result is that, under certain conditions on $f$ and $g$, the computational wiretap channel can be approximated, from Eve's point of view, to a classic wiretap channel.}
  \label{fig7} 
\end{figure*}

We present the computational wiretap channel. Alice has some data $x$ and wants to share some computation $h(x)$ with Bob. To do this, she sends $f(x)$, where $f$ is some sufficient statistic for $h$. An eavesdropper, Eve, is interested in computing another function $g(x)$. A diagram for this channel is shown in Figure \ref{fig:comp}.

The computational wiretap channel is a natural model for various settings. For example, Alice could be a user in a social network sharing articles, pictures or videos she likes  with her friend, Bob, and Eve could be the service provider trying to classify some of Alice's personal attributes like her sexual orientation, ethnicity, political views, etc (See \cite{kosi13}). 

Our main result is that, under certain conditions on the functions $f$ and $g$, the computational wiretap channel can be approximated by the classic wiretap channel \cite{wyne75}, shown in Figure \ref{fig:genwire}. Our result has two versions, one for real-valued Boolean functions and one for Boolean functions. We state them here informally.

\begin{theorem}{(Informal)}
Let $f,g: \{-1,1\}^n \rightarrow \mathbb{R}$ be real-valued Boolean functions. Suppose $f$ and $g$ are low influence functions, i.e., their values do not rely too much on any coordinate. Then, from Eve's point of view, the computational wiretap, shown in Figure \ref{fig:comp}, can be approximated by an additive wiretap channel, shown in Figure \ref{fig:addit}.
\end{theorem}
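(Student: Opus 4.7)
My strategy is to convert the Boolean problem into a Gaussian one via the invariance principle, exploit the fact that a pair of low-influence multilinear polynomials of independent Gaussians is approximately jointly Gaussian, and then use the elementary observation that every jointly Gaussian pair decomposes as an additive-noise channel via linear regression.

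I would first write $f$ and $g$ in the Fourier--Walsh basis so that the low-influence hypothesis reads $\Inf_i[f],\Inf_i[g] \le \tau$ for every coordinate $i$, truncating both polynomials to degree at most $d$ if needed (and absorbing the tail into the approximation error). The Mossel--O'Donnell--Oleszkiewicz invariance principle, applied to the pair $(f,g)$, then implies that the joint distribution of $(f(\bm{x}),g(\bm{x}))$ with $\bm{x}$ uniform on $\{-1,1\}^n$ is close in L\'evy distance to the joint distribution of $(\tilde{f}(\bm{z}),\tilde{g}(\bm{z}))$, where $\bm{z}$ is a standard Gaussian vector in $\mathbb{R}^n$ and $\tilde{f},\tilde{g}$ are the multilinear extensions of $f,g$. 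A quantitative multivariate CLT on the Wiener chaos (Stein's method / Nourdin--Peccati type) then shows that this Gaussian evaluation is itself close to a bivariate Gaussian $(F^*,G^*)$ with matching covariance. Writing $G^* = \alpha F^* + N$ with $\alpha = \operatorname{Cov}(F^*,G^*)/\Var(F^*)$ and $N$ independent of $F^*$ realises the additive wiretap channel of Figure~\ref{fig:addit}: Eve's task of estimating $g(\bm{x})$ from $f(\bm{x})$ is within some total variation $\epsilon(\tau,d)$ of estimating $\alpha F^* + N$ from $F^*$.

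The main difficulty I anticipate is that all approximations must be \emph{joint} rather than marginal. Invariance and ordinary CLT arguments easily yield marginal convergence of $f(\bm{x})$ and $g(\bm{x})$ to Gaussians, but the additive-channel decomposition genuinely requires the residual $N$ to be independent of $F^*$, and this independence holds only once $(F^*,G^*)$ is itself jointly Gaussian. I would address this either through a multivariate fourth-moment criterion on the Wiener chaos, or by invoking the vector-valued form of the MOO invariance principle directly, and then carefully track how errors propagate through the linear regression step to determine both $\Var(N)$ and the overall approximation rate in terms of $\tau$ and $d$.
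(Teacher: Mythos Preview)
Your plan has a genuine gap at the second approximation step: the claim that $(\tilde f(\bm z),\tilde g(\bm z))$ is close to a bivariate Gaussian does not follow from low influence. The Nourdin--Peccati fourth-moment machinery needs more than small influences; for a degree-$2$ form $\sum_{i<j}a_{ij}z_iz_j$ it needs the spectral norm of $(a_{ij})$ to vanish, whereas low influence only bounds the $\ell_2$ norm of each row. Concretely, take $f(x)=\tfrac{2}{n}\sum_{i<j}x_ix_j$: every coordinate has influence $4(n-1)/n^2=O(1/n)$, yet $\tilde f(\bm z)=\tfrac{1}{n}\bigl[(\sum_i z_i)^2-\sum_i z_i^2\bigr]$ converges in law to $\chi^2_1-1$, not to a Gaussian. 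Hence the regression decomposition $G^*=\alpha F^*+N$ with $N$ \emph{independent} of $F^*$ is simply unavailable under the stated hypotheses, and with it the genuinely independent-noise additive channel you are aiming for.

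The paper sidesteps this by asking for much less. It first makes the trivial observation that one can always write $f(\bm x)=g(\bm x)+N(\bm x)$ with $N=f-g$ (Theorem~\ref{teo:additive equi}), so the channel is \emph{formally} additive without any independence claim. It then applies the basic invariance principle to the single polynomial $N=f-g$: Lemmas~\ref{lem:var difference} and~\ref{lem:influence diff} bound $\Var[N]$ and $\Inf_t[N]$ from the corresponding bounds on $f$ and $g$, and Corollary~\ref{cor:bip} gives $|E[\psi(N(\bm x))]-E[\psi(N(\bm g))]|\le \tfrac{C}{3}k9^k\epsilon$. This is only a statement about the \emph{marginal} law of the noise --- it never asserts independence of $N$ from the signal $g(\bm x)$ --- which is a weaker conclusion than the one you set out to prove, but one that actually follows from low influence alone.
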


\begin{theorem}{(Informal)}
Let $f,g: \{-1,1\}^n \rightarrow \{-1,1\}$ be Boolean functions. Suppose $f$ and $g$ are low influence functions, i.e., their values do not rely too much on any coordinate. Then, from Eve's point of view, the computational wiretap, shown in Figure \ref{fig:comp}, can be approximated by a multiplicative wiretap channel, shown in Figure \ref{fig:mult}.
\end{theorem}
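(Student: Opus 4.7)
The plan is to follow the same Fourier-analytic and invariance strategy that underlies the previous theorem for real-valued functions, adding a sign-rounding step to account for the $\{-1,1\}$ range of $g$. Write $f = \sum_S \hat f(S)\chi_S$ and $g = \sum_S \hat g(S)\chi_S$ where $\chi_S(\bm x) = \prod_{i\in S}\bm x_i$; the low-influence hypothesis is $\max_i \Inf_i(f), \max_i \Inf_i(g) \le \tau$ for small $\tau$. From Eve's point of view, the joint distribution of $(f(\bm x), g(\bm x))$ on $\{-1,1\}^2$ is a four-atom distribution characterized by the marginal means $\mu_f = \mathbb{E}[f]$, $\mu_g = \mathbb{E}[g]$, and the correlation $\rho = \mathbb{E}[fg] = \sum_S \hat f(S)\hat g(S)$. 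The goal is to show this joint law is close in total variation to the law of $(g(\bm x)\cdot \bm n,\, g(\bm x))$ for some sign-valued noise $\bm n$ independent of $\bm x$, which is the form of a multiplicative wiretap.

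The first main step is to apply, after a mild smoothing by $T_{1-\delta}$, the Mossel--O'Donnell--Oleszkiewicz invariance principle. This replaces $\bm x\in\{-1,1\}^n$ with a jointly Gaussian vector and yields multilinear polynomials $\tilde f, \tilde g$ of Gaussians whose joint distribution is close to that of $(f(\bm x), g(\bm x))$ against a rich class of smooth test functions, with error controlled by a power of $\tau$ and the degree-truncation parameter $\delta$. The limiting bivariate Gaussian pair is determined solely by $\mu_f, \mu_g, \rho$.

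The second step is to round back to $\{-1,1\}$ via $\operatorname{sgn}$ and match moments. Any joint distribution on $\{-1,1\}^2$ can be written as the law of $(u\cdot \bm n,\,u)$ for an input $u$ and a sign-valued noise $\bm n$ independent of $u$ precisely when the cross-moment $\mathbb{E}[u\cdot v]$ and the marginals are compatible; I would verify this compatibility for the Gaussian-rounded distribution and pick $\Pr(\bm n=-1)$ to match the conditional flip probability, thereby producing an explicit multiplicative wiretap channel whose law agrees with the approximation.

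The hardest step will be the sign-rounding: the invariance principle controls smooth test functions, but $\operatorname{sgn}$ is discontinuous, so I would need a Carbery--Wright-type anti-concentration bound for low-degree multilinear polynomials of Gaussians to turn smooth-test-function proximity into total-variation proximity between four-atom distributions. Controlling the regularization error from $T_{1-\delta}$ and the anti-concentration error simultaneously, jointly in $\tau$ and $\delta$, is the delicate quantitative part; once done, the existence of the matching multiplicative noise $\bm n$ follows by moment-matching.
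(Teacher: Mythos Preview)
Your plan diverges substantially from the paper's, and in a direction that both overcomplicates the argument and targets a stronger conclusion than the paper actually establishes.

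The paper proceeds in two short steps. First (Theorem~\ref{teo:mult equivalence}), since $g$ is $\{-1,1\}$-valued one has $f(\bm x)=g(\bm x)\cdot N(\bm x)$ \emph{exactly} with $N=fg$; this already gives an exact equivalence to a multiplicative wiretap, with noise that may depend on $\bm x$. Second, the ``approximation'' is only about the \emph{marginal} law of $N$: Lemma~\ref{lem: influence mult} bounds $\Inf_t[fg]$ in terms of $\Inf_t[f]$, $\Inf_t[g]$ and the number of monomials, any Boolean function has variance at most $1$, and then Corollary~\ref{cor:bip} is applied directly to the single polynomial $N=fg$ to conclude that $N(\bm x)$ and $N(\bm g)$ are close against smooth test functions. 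No joint invariance, no sign-rounding, no Carbery--Wright is used.

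Your plan instead tries to approximate the joint law of $(f,g)$ by a multiplicative channel whose noise $\bm n$ is \emph{independent of $\bm x$}, hence of $u=g(\bm x)$. That is strictly stronger than what the paper claims, and in general it fails: the paper explicitly allows binary \emph{asymmetric} channels (see the remark after Theorem~\ref{teo:mult equivalence} and Example~\ref{ex:BAC}, where the channel is a $Z$-channel), whereas a $\{-1,1\}$-valued $\bm n$ independent of $u$ forces a symmetric channel, so your moment-matching step need not have a solution. There is also a slip in your first step: the invariance principle replaces $\bm x$ by Gaussians inside the multilinear polynomials, yielding a pair of \emph{polynomials} of Gaussians, not a ``bivariate Gaussian pair determined solely by $\mu_f,\mu_g,\rho$''; that description is only correct when $f$ and $g$ are degree~$1$. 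The key lemma you are missing is the paper's Lemma~\ref{lem: influence mult}, which lets one work with the single product polynomial $fg$ and avoid the joint/rounding route entirely.
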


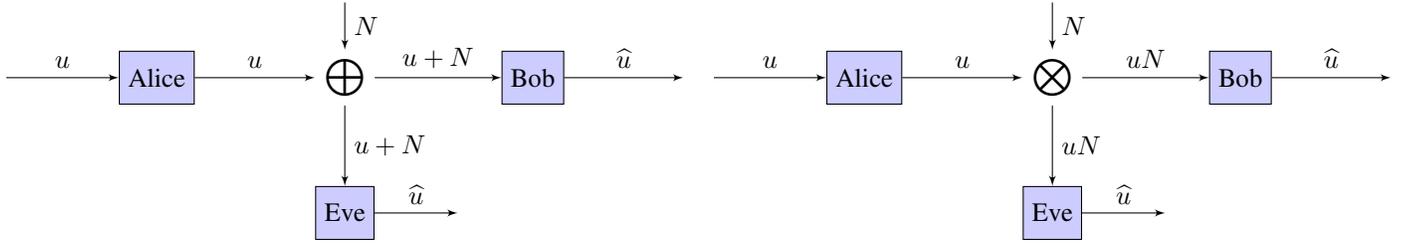
\begin{figure*}[!t]
  \begin{subfigure}[b]{0.5\linewidth}
    \centering

\tikzstyle{int}=[draw, fill=blue!20, minimum size=2em]
\tikzstyle{init} = [pin edge={to-,thin,black}]

\begin{tikzpicture}[node distance=2.5cm,auto,>=latex']
    \node [int] (a) {Alice};
    \node (b) [left of=a,node distance=2cm, coordinate] {a};
    \node (c) [right of=a] {\Large{$\bigoplus$}};
    \node [int] (d) [right of=c] {Bob};
    \node [coordinate] (end1) [right of=d, node distance=2cm]{};
    
    \node [int] (e) [below of=c,node distance=1.8cm] {Eve};
    \node [coordinate] (end2) [right of=e, node distance=1.5cm]{};
    
    \node (f) [above of=c,node distance=1cm, coordinate] {c};
    
    \path[->] (b) edge node {$u$} (a);
    \path[->] (a) edge node {$u$} (c);
    \path[->] (c) edge node {$u + N$} (d);
    \draw[->] (d) edge node {$\widehat{u}$} (end1);
    
    \path[->] (c) edge node {$u + N$} (e);
    \draw[->] (e) edge node {$\widehat{u}$} (end2);
    
    \path[->] (f) edge node {$N$} (c);

\end{tikzpicture}
\caption{The Additive Wiretap Channel: Alice sends a message $u \in \mathbb{R}$ through a channel which outputs $u+N \in \mathbb{R}$. Both Bob and Eve output an estimate $\widehat{u}$ of $u$.} \label{fig:addit}
  \end{subfigure} \hspace{5pt}
  \begin{subfigure}[b]{0.5\linewidth}
    \centering

\tikzstyle{int}=[draw, fill=blue!20, minimum size=2em]
\tikzstyle{init} = [pin edge={to-,thin,black}]

\tikzstyle{int}=[draw, fill=blue!20, minimum size=2em]
\tikzstyle{init} = [pin edge={to-,thin,black}]

\begin{tikzpicture}[node distance=2.5cm,auto,>=latex']
    \node [int] (a) {Alice};
    \node (b) [left of=a,node distance=2cm, coordinate] {a};
    \node (c) [right of=a] {\Large{$\bigotimes$}};
    \node [int] (d) [right of=c] {Bob};
    \node [coordinate] (end1) [right of=d, node distance=2cm]{};
    
    \node [int] (e) [below of=c,node distance=1.8cm] {Eve};
    \node [coordinate] (end2) [right of=e, node distance=1.5cm]{};
    
    \node (f) [above of=c,node distance=1cm, coordinate] {c};
    
    \path[->] (b) edge node {$u$} (a);
    \path[->] (a) edge node {$u$} (c);
    \path[->] (c) edge node {$u  N$} (d);
    \draw[->] (d) edge node {$\widehat{u}$} (end1);
    
    \path[->] (c) edge node {$u  N$} (e);
    \draw[->] (e) edge node {$\widehat{u}$} (end2);
    
    \path[->] (f) edge node {$N$} (c);

\end{tikzpicture}
\caption{The Multiplicative Wiretap Channel: Alice sends a message ${u \in \{-1,1\}}$ through a channel which outputs $uN \in \{-1,1\}$. Both Bob and Eve output an estimate $\widehat{u}$ of $u$.} \label{fig:mult}
  \end{subfigure} 
  \caption{Theorem \ref{teo:additive invariance} states that if $f$ and $g$ are real-valued Boolean functions satisfying some conditions then, from Eve's point of view, the computational wiretap can be approximated by an additive wiretap channel. Theorem \ref{teo: mult invariance} states that in the case of Boolean functions the computational wiretap can be approximated by a multiplicative wiretap channel.  }
  \label{fig8} 
\end{figure*}

The proofs of Theorems \ref{teo:additive invariance} and \ref{teo: mult invariance} rely heavily on a generalization, known as the Basic Invariance Principle \cite{moss10}, of the Berry-Esseen theorem. To use this result we will need some notation and tools from the field of analysis of Boolean functions, the topic of Section \ref{sec:analysisboolfun}.

In the remaining sections we consider different classes for the functions $f$ and $g$. For each class of functions we show a formal equivalence between the computational wiretap channel and the classic wiretap channel in  Theorems \ref{teo:genequi}, \ref{teo:additive equi}, and \ref{teo:mult equivalence}. These formal equivalences are used to prove our two main results, Theorem \ref{teo:additive invariance} in Section \ref{sec:additive}, and Theorem \ref{teo: mult invariance} in Section \ref{sec:multiplicative}.

\section{Analysis of Boolean Functions} \label{sec:analysisboolfun}

In this section we give the necessary tools for proving Theorems \ref{teo:additive invariance} and \ref{teo: mult invariance}. These theorems rely heavily on what is known as the Basic Invariance Principle \cite{moss10}, presented in Theorem \ref{teo:invprince}, a generalization of the Berry-Esseen Theorem.

All results in this section, apart from Lemmas~\ref{lem:var difference}, \ref{lem:influence diff}, and \ref{lem: influence mult}, are taken from \cite{odon14} and are included here for the  convenience of the reader.

Analysis of Boolean functions is the study of real-valued Boolean functions $f:\{-1, 1\}^n \rightarrow \mathbb{R}$ using analytical techniques. We begin by looking at the Fourier expansion.

Every real-valued Boolean function, $f:\{-1,1\}^n \rightarrow \mathbb{R}$ can be represented as a real multilinear polynomial, known as the Fourier expansion of $f$.

\begin{theorem} \label{teo: fourier representation}
Every function $f:\{-1,1\}^n \rightarrow \mathbb{R}$ can be uniquely expressed as a multilinear polynomial,
\[ f(x) = \sum_{S \subseteq [n]} \widehat{f}(S) x^S \]
where $\widehat{f}:2^{[n]} \rightarrow \mathbb{R}$ and $x^S = x_1^{i_1} \ldots x_n^{i_n}$ with $i_k=1$ if $k \in S$ and $i_k=0$ otherwise.		
\end{theorem}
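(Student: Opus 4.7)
The plan is to prove existence by explicit interpolation and uniqueness by a dimension-count (or equivalently, by orthogonality under the uniform measure on $\{-1,1\}^n$).

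For existence, I would first observe that for any $a \in \{-1,1\}^n$, the indicator function $\mathbb{1}_{x=a}$ agrees on the cube with the polynomial
\[ \prod_{i=1}^{n} \frac{1 + a_i x_i}{2}, \]
since each factor equals $1$ when $x_i = a_i$ and $0$ when $x_i = -a_i$. Expanding the product gives a polynomial that is a priori a sum of monomials $x_1^{e_1} \cdots x_n^{e_n}$ with $e_i \in \{0,1\}$, hence already multilinear. Then I would write
\[ f(x) = \sum_{a \in \{-1,1\}^n} f(a) \prod_{i=1}^{n} \frac{1 + a_i x_i}{2}, \]
which, on the cube, evaluates to $f(x)$. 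Collecting terms by the subset $S$ of variables appearing yields an expression of the required form $\sum_{S \subseteq [n]} \widehat{f}(S)\, x^S$.

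For uniqueness, I would count dimensions. The vector space $\mathbb{R}^{\{-1,1\}^n}$ of all real-valued functions on the cube has dimension $2^n$, while the space of real multilinear polynomials in $n$ variables is spanned by the $2^n$ monomials $\{x^S : S \subseteq [n]\}$. Since the restriction-to-the-cube map from multilinear polynomials to $\mathbb{R}^{\{-1,1\}^n}$ is surjective by the existence argument above, and the two spaces have equal dimension, this map must be a bijection. Hence the representation is unique, and the coefficients $\widehat{f}(S)$ are well defined.

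A cleaner alternative for uniqueness, which I would mention because it is the path actually used later in the paper (for Plancherel-type identities), is to verify orthogonality of the monomials under the uniform measure: $\mathbb{E}[x^S x^T] = \mathbb{E}[x^{S \triangle T}] = \mathbb{1}_{S=T}$, which immediately forces $\widehat{f}(S) = \mathbb{E}[f(x)\, x^S]$ and thus pins down the coefficients. The main obstacle is mainly bookkeeping — namely, making sure the reduction from arbitrary polynomial representations to multilinear ones is clean (using $x_i^2 = 1$ on the cube) so that the dimension count applies to the correct space; beyond that, the argument is essentially linear algebra.
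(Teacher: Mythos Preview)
Your argument is correct and is precisely the standard proof: the paper does not spell it out but simply cites Theorem~1.1 of \cite{odon14}, whose proof is exactly the interpolation via $\prod_i (1+a_i x_i)/2$ followed by the dimension (equivalently, orthogonality) argument you give. So your proposal matches the reference the paper defers to.
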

\begin{proof}
Theorem 1.1 in \cite{odon14}.
\end{proof}

\begin{example} \label{exe:majority fourier expansion}
The majority function on $3$ bits, denoted by ${\Maj_3 : \{-1,1\}^3 \rightarrow \{-1,1\}}$, outputs the most frequent $\pm 1$ bit in the input. It is easy to check that its Fourier expansion is
\[ \Maj_3 (x_1,x_2,x_3) = \frac{1}{2} \left( x_1 + x_2 + x_3 - x_1 x_2 x_3 \right) .\]
\end{example}

Our main results will rely on the notion of the influence of a coordinate. This notion was originally introduced in \cite{penr46} in the context of social choice theory and has found many other uses in combinatorics and computer science.

\begin{notation}
We will always write random variables in boldface. Probabilities and expectations will always be with respect to a uniformly random $\bm{x} \sim \{-1,1\}^n$ unless specified otherwise.
\end{notation}

\begin{definition} \label{def: influence}
The \emph{influence} of coordinate $i$ in the function $f:\{-1,1\}^n \rightarrow \{-1,1\}$ is defined as \[ {Inf}_i [f] = \Pr [f(\bm{x}) \neq f(\bm{x}^{\oplus i})] \]
where $x^{\oplus i} = (x_1,\ldots,x_{i-1}, -x_i,x_{i+1},\ldots, x_n)$.
\end{definition}

Informally, the influence of a coordinate measures how much it influences the value of the function. 

The influence can be expressed in terms of the function's Fourier expansion, allowing Definition \ref{def: influence} to be extended to real-valued Boolean functions.

\begin{theorem} \label{teo: influence}
Let $f:\{-1,1\}^n \rightarrow \mathbb{R}$ and $i \in [n]$. Then, 
\[ {Inf}_i [f] = \sum_{S \ni i} \widehat{f}(S)^2 .\]
\end{theorem}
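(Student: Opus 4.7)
The plan is to connect the combinatorial quantity $\Pr[f(\bm{x}) \neq f(\bm{x}^{\oplus i})]$ to a Fourier-analytic quantity via the $i$-th discrete derivative, and then invoke orthonormality of the monomials $x^S$ under the uniform measure.

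First, I would introduce the $i$-th discrete derivative $D_i f(x) = \tfrac{1}{2}\bigl(f(x) - f(x^{\oplus i})\bigr)$ and compute its Fourier expansion. Using Theorem~\ref{teo: fourier representation}, write $f(x) = \sum_{S \subseteq [n]} \widehat{f}(S)\, x^S$ and observe that $(x^{\oplus i})^S = -x^S$ when $i \in S$ and $(x^{\oplus i})^S = x^S$ otherwise. Substituting and simplifying gives
\[ D_i f(x) \;=\; \sum_{S \ni i} \widehat{f}(S)\, x^{S \setminus \{i\}}. \]
Thus $D_i f$ does not depend on $x_i$, and its Fourier coefficients are exactly the $\widehat{f}(S)$ for $S \ni i$.

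Next, I would specialize to the Boolean-valued case $f:\{-1,1\}^n \to \{-1,1\}$ (which is what the definition in Definition~\ref{def: influence} covers). Since $f(\bm{x}),\, f(\bm{x}^{\oplus i}) \in \{-1,1\}$, the quantity $D_i f(\bm{x})$ lies in $\{-1,0,1\}$, and $D_i f(\bm{x}) \neq 0$ precisely when $f(\bm{x}) \neq f(\bm{x}^{\oplus i})$. Consequently $(D_i f(\bm{x}))^2$ is the indicator of the event that flipping coordinate $i$ changes $f$, so
\[ \Inf_i[f] \;=\; \Pr\bigl[f(\bm{x}) \neq f(\bm{x}^{\oplus i})\bigr] \;=\; \mathbb{E}\bigl[(D_i f(\bm{x}))^2\bigr]. \]

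Finally, I would compute $\mathbb{E}[(D_i f)^2]$ by Parseval. Under uniform $\bm{x}$, the monomials $\{x^S\}_{S \subseteq [n]}$ are orthonormal, i.e.\ $\mathbb{E}[x^S x^T] = \mathbf{1}_{S=T}$. Expanding the square of the Fourier representation of $D_i f$ derived above therefore yields
\[ \mathbb{E}\bigl[(D_i f(\bm{x}))^2\bigr] \;=\; \sum_{S \ni i} \widehat{f}(S)^2, \]
proving the identity for Boolean-valued $f$ and simultaneously exhibiting the natural right-hand side that extends the definition of influence to arbitrary $f:\{-1,1\}^n \to \mathbb{R}$. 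There is no real obstacle here: the only delicate point is the step where squaring converts the $\{0,1\}$-valued event indicator into an $L^2$ quantity amenable to Parseval, which is precisely what makes the equivalence between the combinatorial and Fourier definitions work and what justifies extending the definition to real-valued functions.
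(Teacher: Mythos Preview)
Your proof is correct in approach and is precisely the standard argument; the paper does not actually prove this result but simply cites it as Theorem~2.20 in O'Donnell's book, where essentially this derivation appears. One small correction: with your definition $D_i f(x)=\tfrac{1}{2}\bigl(f(x)-f(x^{\oplus i})\bigr)$ the Fourier expansion is $\sum_{S\ni i}\widehat f(S)\,x^{S}$, not $x^{S\setminus\{i\}}$, so this $D_i f$ \emph{does} depend on $x_i$; it is the alternative definition $\tfrac{1}{2}\bigl(f(x^{i\to 1})-f(x^{i\to -1})\bigr)$ that yields the $x_i$-free form you wrote. The slip is harmless for your argument, since Parseval gives $\sum_{S\ni i}\widehat f(S)^2$ in either case.
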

\begin{proof}
Theorem 2.20 in \cite{odon14}.
\end{proof}

\begin{example} \label{exe:majority influence}
Consider the majority function on $3$ bits in Example \ref{exe:majority fourier expansion}. Then, for every $t \in [3]$, $\Inf_t [\Maj_3] = 1/2$.
\end{example}

Another key property which can be expressed in terms of the function's Fourier expansion is the variance.

\begin{proposition}
The variance of $f: \{-1,1\}^n \rightarrow \mathbb{R}$ is
\[ \Var [f] = \sum_{S \neq \emptyset} \widehat{f}(S)^2 \]
\end{proposition}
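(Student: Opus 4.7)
The plan is to apply the standard orthogonality-of-characters argument in the $\{-1,1\}^n$ Fourier basis, reducing the statement to a one-line consequence of $\Var[f] = \mathbb{E}[f^2] - \mathbb{E}[f]^2$.

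First I would establish the orthonormality of the monomials $x^S$ under the uniform measure on $\{-1,1\}^n$. Because the coordinates $\bm{x}_i$ are independent uniform $\pm 1$ variables, $\mathbb{E}[\bm{x}_i] = 0$ and $\mathbb{E}[\bm{x}_i^2] = 1$. For any $S, T \subseteq [n]$, the product $\bm{x}^S \bm{x}^T$ equals $\bm{x}^{S \triangle T}$ since $\bm{x}_i^2 = 1$, so $\mathbb{E}[\bm{x}^S \bm{x}^T] = \prod_{i \in S \triangle T} \mathbb{E}[\bm{x}_i] = \mathbf{1}\{S = T\}$.

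Next I would compute the two ingredients of the variance using the Fourier expansion from Theorem~\ref{teo: fourier representation}. Taking expectation termwise gives $\mathbb{E}[f] = \sum_{S} \widehat{f}(S)\mathbb{E}[\bm{x}^S] = \widehat{f}(\emptyset)$, since only the $S = \emptyset$ term survives. Expanding $f^2 = \sum_{S,T} \widehat{f}(S)\widehat{f}(T) \bm{x}^S \bm{x}^T$ and applying the orthonormality relation above yields Plancherel's identity
\[ \mathbb{E}[f^2] = \sum_{S \subseteq [n]} \widehat{f}(S)^2. \]

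Finally, subtracting gives
\[ \Var[f] = \mathbb{E}[f^2] - \mathbb{E}[f]^2 = \sum_{S \subseteq [n]} \widehat{f}(S)^2 - \widehat{f}(\emptyset)^2 = \sum_{S \neq \emptyset} \widehat{f}(S)^2, \]
as desired. There is no real obstacle here: the only thing one must be careful about is that the probability measure on $\{-1,1\}^n$ is the uniform one (as fixed by the Notation block), since orthonormality of the characters and the vanishing of $\mathbb{E}[\bm{x}^S]$ for $S \neq \emptyset$ both depend on this choice.
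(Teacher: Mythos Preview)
Your argument is correct and is exactly the standard orthogonality-of-characters proof. The paper itself does not give an argument here; it simply cites Proposition~1.13 of O'Donnell's \emph{Analysis of Boolean Functions}, whose proof is precisely the Plancherel computation you wrote out, so your proposal matches the intended approach.
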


\begin{proof}
Proposition 1.13 in \cite{odon14}.
\end{proof}

\begin{example} \label{exe:majority variance}
Consider the majority function on $3$ bits in Example \ref{exe:majority fourier expansion}. Then, $Var [\Maj_3] = 1$.
\end{example}

The basic invariance principle, Theorem \ref{teo:invprince}, gives conditions under which the random variable $\bm{x} = (\bm{x_1},\ldots, \bm{x_n} )$ can be substituted by $\bm{g} = (\bm{g_1},\ldots, \bm{g_n} )$, where each $\bm{g_i}$ is a standard Gaussian, i.e. a Gaussian with mean $0$ and variance $1$. We need the following hypothesis on the probability distributions.

\begin{hypothesis} \label{hyp:var}
The random variable $\bm{x}_i$ satisfies $E[\bm{x}_i]=0$, $E[\bm{x}_i^2]=1$, $E[\bm{x}_i^3]=0$, and $E[\bm{x}_i^4]\leq 9$.
\end{hypothesis}

The main examples to keep in mind are the uniform $\pm 1$ random bit and the standard Gaussian.

We now present the basic invariance principle.

\begin{theorem} \label{teo:invprince}
Let $F$ be a formal $n$-variate multilinear polynomial of degree at most $k \in \mathbb{N}$, 
\[ F(x) = \sum_{S \subseteq [n], |S| \leq k} \hat{F}(S) x^S .\]
Let $\bm{x}=(\bm{x}_1,\ldots,\bm{x}_n)$ and $\bm{y}=(\bm{y}_1,\ldots,\bm{y}_n)$ be sequences of independent random variables, each satisfying Hypothesis \ref{hyp:var}. Assume $\psi: \mathbb{R} \rightarrow \mathbb{R}$ is $\mathcal{C}^4$ with $||\psi''''||_\infty \leq C$.\footnote{Being $\mathcal{C}^4$ means that the derivatives $\psi', \ldots, \psi^{''''}$ exist and are continuous.} Then \[ | E[\psi(F(\bm{x}))]-E[\psi(F(\bm{y}))]| \leq \frac{C}{12} 9^k \sum_{t=1}^n {Inf}_t[F]^2 .\]
\end{theorem}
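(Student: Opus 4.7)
The plan is to prove the bound by a Lindeberg-style replacement argument: swap the coordinates of $\bm{x}$ with those of $\bm{y}$ one at a time and control the error of each single swap via Taylor's theorem, exploiting the fact that the first three moments of $\bm{x}_i$ and $\bm{y}_i$ agree by Hypothesis~\ref{hyp:var}, which kills the leading contributions.

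First I would introduce the hybrid sequences $\bm{z}^{(i)} = (\bm{y}_1,\ldots,\bm{y}_i,\bm{x}_{i+1},\ldots,\bm{x}_n)$, so that $\bm{z}^{(0)} = \bm{x}$ and $\bm{z}^{(n)} = \bm{y}$. A telescoping application of the triangle inequality reduces the task to bounding $|E[\psi(F(\bm{z}^{(i-1)}))] - E[\psi(F(\bm{z}^{(i)}))]|$ for each $i$. Multilinearity of $F$ then permits the decomposition $F(z) = U_i(z) + z_i V_i(z)$, where $U_i(z) = \sum_{S \not\ni i} \widehat{F}(S) z^S$ and $V_i(z) = \sum_{S \ni i} \widehat{F}(S) z^{S \setminus \{i\}}$ do not depend on $z_i$. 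Setting $A = U_i(\bm{z}^{(i-1)})$ and $B = V_i(\bm{z}^{(i-1)})$ (which equal $U_i(\bm{z}^{(i)})$ and $V_i(\bm{z}^{(i)})$ as well), the $i$-th step reduces to comparing $E[\psi(A + \bm{x}_i B)]$ with $E[\psi(A + \bm{y}_i B)]$, where $\bm{x}_i$ and $\bm{y}_i$ are each independent of the pair $(A,B)$.

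Next I would Taylor expand $\psi$ at $A$ up to order three with Lagrange remainder bounded by $C/24$. Taking expectations and using independence yields
\[ E[\psi(A + t B)] = E[\psi(A)] + \sum_{j=1}^{3} \frac{E[t^j]}{j!} E[\psi^{(j)}(A) B^j] + E[R(t)], \]
with $|E[R(t)]| \leq (C/24) E[t^4] E[B^4]$, valid for $t \in \{\bm{x}_i, \bm{y}_i\}$. Hypothesis~\ref{hyp:var} forces $E[\bm{x}_i^j] = E[\bm{y}_i^j]$ for $j = 1, 2, 3$, so the first three Taylor terms cancel exactly upon subtraction, leaving a per-step discrepancy of at most $\frac{C}{24}(E[\bm{x}_i^4] + E[\bm{y}_i^4]) E[B^4] \leq \frac{3C}{4} E[V_i(\bm{z}^{(i-1)})^4]$.

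The main obstacle is converting this into a quantity involving $\Inf_i[F] = \sum_{S \ni i} \widehat{F}(S)^2 = E[V_i(\bm{z}^{(i-1)})^2]$. For this I would invoke the $(2,4)$-hypercontractive estimate for multilinear polynomials of degree at most $k-1$ in independent random variables satisfying Hypothesis~\ref{hyp:var}, namely $\|V_i\|_4^4 \leq 9^{k-1} \|V_i\|_2^4$, whose proof proceeds by induction on the number of variables starting from a scalar two-point inequality calibrated to the fourth-moment bound of $9$. Substituting in gives a per-coordinate bound of $\frac{C}{12} 9^k \Inf_i[F]^2$, and summation over $i = 1,\ldots,n$ produces exactly the claimed inequality. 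The hypercontractivity estimate is the only truly delicate ingredient; the replacement scheme, the Taylor expansion, and the moment cancellation are essentially mechanical.
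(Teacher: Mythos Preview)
Your proof is correct and follows the standard Lindeberg replacement argument with third-order Taylor expansion and $(2,4)$-hypercontractivity; this is exactly the argument given on page~357 of \cite{odon14}, which is all the paper invokes for its proof. Since the paper does not supply an independent proof but merely cites that reference, there is no alternative approach to compare against, and your sketch faithfully reproduces the cited one, including the arithmetic that turns $\tfrac{3C}{4}\cdot 9^{k-1}$ into $\tfrac{C}{12}\cdot 9^{k}$.
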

\begin{proof}
See page 357 in \cite{odon14}.
\end{proof}
Some things to note:
\begin{itemize}
    \item Since the standard Gaussian, $\bm{g}_i$, satisfies Hypothesis \ref{hyp:var}, $\bm{y}$ can be taken equal to $\bm{g}=(\bm{g}_1, \ldots, \bm{g}_n)$.
    \item The function $\psi$ is known as a test function. In applications, the test functions of interest might not be differentiable or bounded by their fourth derivative. However, these can often be approximated by smooth functions which do satisfy the necessary conditions. In Corollary 11.68 of \cite{odon14}, for example, the smoothness of $\psi$ is substituted by a Lipschitz condition.
    
    \item The goodness of the approximation depends on the degree, $k$, of the Fourier expansion of $f$. This requirement might be loosened by truncating the polynomial to a certain degree. This is done in Corollary 11.69 of \cite{odon14}.
\end{itemize}

In Theorems \ref{teo:additive invariance} and \ref{teo: mult invariance} we use the following corollary.

\begin{corollary} \label{cor:bip}
In the setting of Theorem \ref{teo:invprince}, if we furthermore have $Var[F]\leq 1$ and ${Inf}_t \leq \epsilon$ for all $t \in [n]$, then
\[ | E[\psi(F(\bm{x}))] - E[F(\psi(\bm{y}))] | \leq \frac{C}{12}  k 9^k \epsilon .\]
\end{corollary}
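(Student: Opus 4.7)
The plan is to feed the hypotheses $\Inf_t[F] \leq \epsilon$ and $\Var[F] \leq 1$ into the bound furnished by Theorem~\ref{teo:invprince}, reducing the quadratic sum of influences to a linear one which can then be controlled by the variance. Concretely, the starting point is the inequality
\[ |E[\psi(F(\bm{x}))] - E[\psi(F(\bm{y}))]| \;\leq\; \frac{C}{12}\,9^k\sum_{t=1}^n \Inf_t[F]^2, \]
so the whole task is to show $\sum_{t=1}^n \Inf_t[F]^2 \leq k\epsilon$.

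First I would peel off one factor of $\Inf_t[F]$ using the low-influence hypothesis, writing $\Inf_t[F]^2 \leq \epsilon\cdot\Inf_t[F]$, and summing to get $\sum_t \Inf_t[F]^2 \leq \epsilon \sum_t \Inf_t[F]$. The remaining linear sum is the total influence, which by Theorem~\ref{teo: influence} equals
\[ \sum_{t=1}^n \Inf_t[F] \;=\; \sum_{t=1}^n \sum_{S \ni t} \widehat{F}(S)^2 \;=\; \sum_{S \subseteq [n]} |S|\,\widehat{F}(S)^2, \]
by swapping the order of summation.

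Now I would invoke the two remaining hypotheses. Because $F$ has degree at most $k$, only sets $S$ with $|S| \leq k$ contribute, so $|S| \leq k$ inside the sum, giving
\[ \sum_{S} |S|\,\widehat{F}(S)^2 \;\leq\; k \sum_{S \neq \emptyset} \widehat{F}(S)^2 \;=\; k\,\Var[F] \;\leq\; k, \]
using the Fourier formula for variance and the assumption $\Var[F] \leq 1$. Chaining these inequalities yields $\sum_t \Inf_t[F]^2 \leq k\epsilon$, and substituting into the basic invariance bound gives exactly $\tfrac{C}{12}\,k\,9^k\,\epsilon$.

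There is no real obstacle here; the argument is a two-line calculation once one recognizes that the $\ell^2$ sum of influences can be pinched between $\epsilon$ and the total influence, and that the degree-$k$ assumption converts total influence into $k\cdot\Var[F]$. The only thing to double-check is the presumable typo in the statement (the right-hand expectation should read $E[\psi(F(\bm{y}))]$, matching Theorem~\ref{teo:invprince}), which does not affect the derivation.
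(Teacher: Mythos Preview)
Your argument is correct and is exactly the standard derivation: bound $\sum_t \Inf_t[F]^2$ by $\epsilon$ times the total influence, then use the degree-$k$ assumption and the Fourier expression for variance to cap the total influence by $k\,\Var[F]\le k$. The paper itself does not supply a proof at all---it simply cites Corollary~11.67 of O'Donnell's book---so you have effectively reconstructed the omitted argument, and your observation about the typo in the displayed expectation is also accurate.
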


\begin{proof}
Corollary 11.67 in \cite{odon14}.
\end{proof}

\begin{example}
Consider the majority function on $3$ bits discussed in Examples \ref{exe:majority fourier expansion}, \ref{exe:majority influence}, and \ref{exe:majority variance}. In the terms of Corollary \ref{cor:bip} we have $k = 3$ and $\epsilon = 1/2$, so that
\[ | E[\psi( \Maj_3 (\bm{x}))] - E[\Maj_3 (\psi(\bm{g}))] | \leq 92 C .\]
\end{example}

We need three lemmas for Theorems \ref{teo:additive invariance} and \ref{teo: mult invariance}.

The first lemma bounds the variance of the difference of two real-valued Boolean functions.

\begin{lemma} \label{lem:var difference}
Let $f,g:\{-1,1\}^n \rightarrow \mathbb{R}$ be two real-valued Boolean functions. If both $Var[f]$ and $Var[g]$ are smaller than $1/4$, then ${Var[f-g] \leq 1}$ .
\end{lemma}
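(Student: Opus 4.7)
The plan is to reduce the statement to two lines using the Fourier-expansion formula for variance from the proposition quoted just above, $\Var[f] = \sum_{S\neq\emptyset} \widehat{f}(S)^2$. Since the Fourier expansion is linear in the function (Theorem~\ref{teo: fourier representation}), we have $\widehat{f-g}(S) = \widehat{f}(S) - \widehat{g}(S)$, so
\[ \Var[f-g] = \sum_{S \neq \emptyset} \bigl(\widehat{f}(S) - \widehat{g}(S)\bigr)^2. \]
Applying the elementary inequality $(a-b)^2 \leq 2a^2 + 2b^2$ term by term and summing gives
\[ \Var[f-g] \leq 2 \sum_{S\neq\emptyset}\widehat{f}(S)^2 + 2\sum_{S\neq\emptyset}\widehat{g}(S)^2 = 2\Var[f] + 2\Var[g] \leq 2\cdot \tfrac14 + 2\cdot \tfrac14 = 1. \]

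An equally short alternative route, which avoids invoking the Fourier formula, is to expand $\Var[f-g] = \Var[f] + \Var[g] - 2\operatorname{Cov}(f,g)$ and bound the covariance by Cauchy--Schwarz, $|\operatorname{Cov}(f,g)| \leq \sqrt{\Var[f]\Var[g]} \leq 1/4$, giving $\Var[f-g] \leq 1/4 + 1/4 + 2\cdot 1/4 = 1$. Either proof is essentially one inequality on top of the identities already available, so there is no real obstacle; the only thing to check is that the constants line up, and they do (with a little slack in the Cauchy--Schwarz version, and exact in the $(a-b)^2 \leq 2a^2+2b^2$ version when all Fourier mass is concentrated on a single coordinate with equal and opposite signs).
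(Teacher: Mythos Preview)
Both of your routes are correct. The alternative you mention---expanding $\Var[f-g]=\Var[f]+\Var[g]-2\operatorname{Cov}(f,g)$ and bounding the covariance via Cauchy--Schwarz---is precisely the paper's argument (modulo a sign typo there: the paper writes $\Var[f]-\Var[g]$ where $\Var[f]+\Var[g]$ is meant). Your primary Fourier-coefficient route using $(a-b)^2\le 2a^2+2b^2$ is essentially the same inequality applied termwise rather than globally: the termwise bound rests on $2|ab|\le a^2+b^2$, which is exactly what drives the Cauchy--Schwarz estimate on the covariance. So the two presentations are interchangeable and neither buys anything material over the other; the Fourier version has the minor cosmetic advantage of not having to track the sign of $\operatorname{Cov}(f,g)$.
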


\begin{proof}
It follows from the Cauchy–Schwarz inequality that
\[ \left| Cov[f,g] \right|^2 \leq Var[f] Var[g] \leq \dfrac{1}{16} .\]
Thus,
\[ Var[f-g] = Var[f]-Var[g]-2 Cov[f,g] \leq 1 .\]
\end{proof}

The second lemma shows that if a variable has low influence on two functions, then it has low influence on their difference.

\begin{lemma} \label{lem:influence diff}
Let $f,g:\{-1,1\}^n \rightarrow \mathbb{R}$ be two real-valued Boolean functions. If both $\Inf_t [f]$ and $\Inf_t [g]$ are smaller than $\epsilon$, then ${\Inf_t [f-g] \leq 4 \epsilon}$.
\end{lemma}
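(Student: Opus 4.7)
The plan is to expand the influence via the Fourier formula from Theorem~\ref{teo: influence} and exploit the linearity of the Fourier transform. Concretely, since the Fourier expansion is the unique multilinear representation, one has $\widehat{f-g}(S) = \widehat{f}(S) - \widehat{g}(S)$ for every $S \subseteq [n]$, so
\[
\Inf_t[f-g] \;=\; \sum_{S \ni t}\bigl(\widehat{f}(S) - \widehat{g}(S)\bigr)^2.
\]

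The next step is to bound each squared difference by the sum of squares. The elementary inequality $(a-b)^2 \leq 2a^2 + 2b^2$ (which is just $(a+b)^2 \geq 0$ rewritten) gives
\[
\sum_{S \ni t}\bigl(\widehat{f}(S) - \widehat{g}(S)\bigr)^2 \;\leq\; 2\sum_{S \ni t}\widehat{f}(S)^2 + 2\sum_{S \ni t}\widehat{g}(S)^2 \;=\; 2\,\Inf_t[f] + 2\,\Inf_t[g],
\]
using Theorem~\ref{teo: influence} in the last equality. The hypothesis that both influences are at most $\epsilon$ then yields the claimed bound $4\epsilon$.

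There is no real obstacle here; the statement is a direct consequence of the Fourier formula for influences combined with the elementary $\ell^2$-triangle-type inequality $(a-b)^2 \leq 2(a^2+b^2)$. The only decision to make is whether to invoke this inequality directly or to derive it via Cauchy--Schwarz (as was done in Lemma~\ref{lem:var difference}); the former is cleaner because we want an absolute bound rather than one involving a covariance-type cross term. The constant $4$ in the conclusion comes precisely from the factor $2$ appearing twice after splitting the cross term.
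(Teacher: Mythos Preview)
Your proof is correct and follows essentially the same route as the paper: both start from the Fourier formula $\Inf_t[f-g]=\sum_{S\ni t}(\widehat{f}(S)-\widehat{g}(S))^2$ via linearity of the Fourier transform. The only cosmetic difference is in the final estimate: the paper expands the square and bounds the cross term $-2\sum_{S\ni t}\widehat{f}(S)\widehat{g}(S)$ by $2\epsilon$ using Cauchy--Schwarz, whereas you apply $(a-b)^2\le 2a^2+2b^2$ termwise---which, as you note, is the cleaner choice here and sidesteps the cross term entirely.
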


\begin{proof}
\begin{align*}
\Inf_t [f-g] &= \sum_{S \ni t} \widehat{(f-g)}(S)^2
= \sum_{S \ni t} \left( \widehat{f}(S) - \widehat{g}(S) \right)^2 \\
&= \sum_{S \ni t} \hat{f}(S)^2 + \sum_{S \ni t} \hat{g}(S)^2 - 2 \sum_{S \ni t} \hat{f}(S) \hat{g}(S) \\
&\leq \epsilon+\epsilon+2\epsilon
\end{align*}
where $\sum_{S \ni t} \hat{f}(S) \hat{g}(S) \leq \epsilon$ follows from the Cauchy–Schwarz inequality.
\end{proof}

The third lemma shows that if a variable has low influence on two functions, then it has low influence on their multiplication, but depending on the number of terms in the polynomial.

\begin{lemma} \label{lem: influence mult}
Let $f,g:\{-1,1\}^n \rightarrow \{-1,1\}$ be two Boolean functions where the polynomials $f$ and $g$ have, respectively, $l_1$ and $l_2$ terms. If both $\Inf_t [f]$ and $\Inf_t [g]$ are smaller than $\epsilon$, then ${\Inf_t [fg] \leq 4 \epsilon l}$, where $l = l_1 l_2$.
\end{lemma}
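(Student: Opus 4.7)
The plan is to mimic the Fourier-analytic calculation used in Lemma~\ref{lem:influence diff}, with Cauchy--Schwarz invoked in a different place to capture the combinatorial blowup coming from multiplication. The starting point is that multiplying two multilinear polynomials in $\pm 1$ variables collapses squared variables to $1$, so the product has Fourier expansion
\[ (fg)(x) = \sum_{S,T} \widehat{f}(S)\widehat{g}(T)\, x^{S \triangle T}, \qquad \widehat{fg}(U) = \sum_{S \triangle T = U} \widehat{f}(S)\widehat{g}(T). \]
The number of pairs $(S,T)$ that can possibly contribute nonzero products is at most $l_1 l_2 = l$.

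First I would apply Cauchy--Schwarz coordinate by coordinate in Fourier space: since at most $l$ terms can collide into each coefficient $\widehat{fg}(U)$,
\[ \widehat{fg}(U)^2 \leq l \sum_{S \triangle T = U} \widehat{f}(S)^2 \widehat{g}(T)^2. \]
Next I would plug this into $\Inf_t[fg] = \sum_{U \ni t} \widehat{fg}(U)^2$ and switch the order of summation. The condition $t \in U = S \triangle T$ is equivalent to saying that $t$ lies in exactly one of $S$ or $T$, so the resulting double sum splits into two pieces:
\[ \Inf_t[fg] \leq l\!\! \sum_{\substack{S \ni t \\ T \not\ni t}}\!\! \widehat{f}(S)^2 \widehat{g}(T)^2 \;+\; l\!\! \sum_{\substack{S \not\ni t \\ T \ni t}}\!\! \widehat{f}(S)^2 \widehat{g}(T)^2. \]

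Finally I would factor each of these double sums. The first one equals $\bigl(\sum_{S \ni t}\widehat{f}(S)^2\bigr)\bigl(\sum_{T \not\ni t}\widehat{g}(T)^2\bigr) = \Inf_t[f] \cdot \sum_{T \not\ni t}\widehat{g}(T)^2$, and symmetrically for the second. Because $f, g$ are $\pm 1$-valued, Parseval gives $\sum_T \widehat{g}(T)^2 = \mathbb{E}[g^2] = 1$, so each ``non-influence'' factor is at most $1$. Combining,
\[ \Inf_t[fg] \leq l\bigl(\Inf_t[f] + \Inf_t[g]\bigr) \leq 2l\epsilon, \]
which is stronger than the claimed $4l\epsilon$ and so implies it. The only real subtlety is the Cauchy--Schwarz step: one must be careful that the ``$l$'' factor counts pairs of Fourier coefficients and not monomials of $fg$, and this is where the hypothesis that $f$ and $g$ themselves have few terms is used (it bounds the number of colliding pairs per $U$).
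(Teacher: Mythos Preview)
Your proof is correct, and in fact yields the sharper bound $\Inf_t[fg]\le 2l\epsilon$. The approach is close in spirit to the paper's but is organized more cleanly. The paper first writes $f=x_t q^f+r^f$ and $g=x_t q^g+r^g$, observes that only the cross terms $x_t q^f r^g$ and $x_t q^g r^f$ carry $x_t$, bounds the influence of each cross term by $l\epsilon$ via Cauchy--Schwarz (using $\sum_i (q^f_i)^2=\Inf_t[f]$ and $\sum_j (r^g_j)^2\le \Var[g]\le 1$), and then invokes Lemma~\ref{lem:influence diff} to combine the two pieces, which is where the factor $4$ appears. Your symmetric-difference bookkeeping does the same split (``$t\in S,\ t\notin T$'' versus ``$t\notin S,\ t\in T$'') directly at the Fourier level, so you never need the sum lemma and save a factor of~$2$. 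As a side remark, since for fixed $U$ the relation $T=S\triangle U$ determines $T$ from $S$, the number of colliding pairs per coefficient is actually at most $\min(l_1,l_2)$, so the same argument gives $\Inf_t[fg]\le 2\min(l_1,l_2)\epsilon$, strictly stronger than both your stated bound and the paper's.
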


\begin{proof}
By definition,
\[ \Inf_t [f.g] = \sum_{S \ni t} \widehat{fg}(S)^2 .\]
We can write $f = x_t q^f + r^f$ and $g= x_t q^g + r^g$ where $q^f,r^f,q^g$ and $r^g$ are polynomials which do not depend on $x_t$. Then, \[fg = x_t q^f r^g + x_t q^g r^f + q^f q^g + r^f r^g .\]

Since $q^f q^g + r^f r^g$ does not depend on $x_t$, \[Inf_t [fg] = Inf_t[x_t q^f r^g + x_t q^g r^f] .\]

We will first calculate $Inf_t[x_t q^f r^g]$. Since $f$ and $g$ have, respectively, $l_1$ and $l_2$ terms, there exists $S^f_i, S^g_j \subseteq [n]-\{t\}$, for every $i \in [l_1]$ and $j \in [l_2]$, such that
\[ q^f = \sum_{i=1}^{l_1} q^f_i x^{S_i} \hspace{5pt} \text{and} \hspace{5pt} r^g = \sum_{j=1}^{l_2} r^g_j x^{S_j} .\]
Thus, 
\[ x_t q^f r^g = \sum_{i=1}^{l_1} \sum_{j=1}^{l_2} q^f_i r^g_j x^{S_i + S_j} x_t .\]
Therefore,
\begin{align*}
    \Inf_t [x_t q^f r^g] &= \sum_{S \ni t} \widehat{x_t q^f r^g}(S)^2 \leq \left( \sum_{i=1}^{l_1} \sum_{i=1}^{l_2} q^f_i r^g_j \right)^2 \\
    &\leq l \sum_{i=1}^{l_1} \sum_{j=1}^{l_2} \left( q^f_i r^g_j \right)^2 = l \sum_{i=1}^{l_1} \left( q^f_i  \right)^2 \sum_{j=1}^{l_2} \left( r^g_j \right)^2 \\
    &\leq l \epsilon
\end{align*} 
where the second inequality follows from the Cauchy–Schwarz inequality and the last one from the fact that \[ \Inf_t [f] = \sum_{i=1}^{l_1} \left( q^f_i  \right)^2 \hspace{5pt} \text{and} \hspace{5pt} \sum_{j=1}^{l_2} \left( r^g_j \right)^2 \leq \Var [g] \leq 1 .\]

The same arguments can be used to show that \[ \Inf_t [x_t q^g r^f] \leq l \epsilon .\]

Thus, by Lemma \ref{lem:influence diff}, $\Inf_t[x_t q^f r^g + x_t q^g r^f] \leq 4 \epsilon l$.

\end{proof}

\section{General Functions} \label{sec:general}

In this section we show a formal equivalence, from Eve's point of view, between the computational wiretap channel, as in Figure \ref{fig:comp}, and the classic wiretap channel, as in Figure \ref{fig:genwire}.

Let $X$ be a set and $f: X \rightarrow Y$ and $g: X \rightarrow Z$ be two functions. We assume some probability distribution on $X$ which induces distributions on $Y$ and $Z$. We denote the corresponding random variables by $\bm{x}$, $\bm{y}$, and $\bm{z}$.

In the computational wiretap channel, shown in Figure~\ref{fig:comp}, Alice has some data $x\in X$ and wants to share some computation $h(x)$ with Bob. To do this, she sends $f(x)$, where $f$ is some sufficient statistic for $h$. An eavesdropper, Eve, is interested in computing another function $g(x)$.

In general, $f$ is not sufficient for computing $g$, and therefore Eve will have an estimate 
\[ \widehat{g(x)} = \argmax_{z\in Z} \Pr(\bm{z} = z | \bm{y} = f(x)). \]

In the classic wiretap channel, shown in Figure \ref{fig:genwire}, Alice sends a message $u \in U$ to Bob. A channel outputs noisy versions, $w \in W$ and $v \in V$, of $u$ to Bob and Eve with probability $\Pr(\bm{w}=w,\bm{v}=v | \bm{u}=u)$.

Eve`s estimate of $u$ is then
\[ \widehat{u} = \argmax_{u\in U} \Pr(\text{channel input} \hspace{3pt} u|\hspace{3pt} \text{channel output} \hspace{3pt} v) .\]

In Theorem \ref{teo:genequi} we show that, from Eve's point of view, the computational wiretap channel is indistinguishable from a classic wiretap channel. We need the following definition to make the statement precise.

\begin{definition}
Two channels are equivalent, from Eve's point of view, if for both channels, the distribution on Eve's input and output are the same.
\end{definition}

\begin{theorem} \label{teo:genequi}
Every computational wiretap channel is equivalent, from Eve's point of view, to a classic wiretap channel.
\end{theorem}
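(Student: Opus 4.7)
The plan is to \emph{construct} a classic wiretap channel whose joint distribution of Eve's input and output matches that of the computational wiretap channel; by the given definition of equivalence, this is all that needs to be shown.

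First I would identify, in each setting, what Eve's ``input'' and ``output'' are. In the classic wiretap of Figure \ref{fig:genwire}, Eve's input is Alice's message $\bm u\in U$ and her output is the received $\bm v\in V$. In the computational wiretap of Figure \ref{fig:comp}, the quantity Eve wishes to recover is $\bm z = g(\bm x)$ and the quantity she actually observes is $\bm y = f(\bm x)$; their joint law is the one induced by the distribution on $\bm x$.

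Next I would exhibit the equivalent classic wiretap channel explicitly. Set $U := Z$ and $V := Y$, endow the input with the marginal
\[
\Pr(\bm u = u) := \Pr\bigl(g(\bm x) = u\bigr),
\]
and take the channel transition law to be
\[
\Pr(\bm v = v \mid \bm u = u) := \Pr\bigl(f(\bm x) = v \,\big|\, g(\bm x) = u\bigr),
\]
which is well defined whenever $\Pr(g(\bm x)=u) > 0$ and may be chosen arbitrarily otherwise. Bob's side of the channel (the alphabet $W$ and the joint conditional law of $(\bm w, \bm v)$ given $\bm u$) can be fixed in any way, since the notion of equivalence from Eve's point of view does not depend on it. By construction, the joint distribution of $(\bm u, \bm v)$ in this channel is exactly the joint distribution of $(g(\bm x), f(\bm x))$ induced by the computational wiretap.

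The bulk of the argument is a routine unwinding of definitions; the only real subtlety is to interpret Eve's ``input'' in the computational wiretap as $g(\bm x)$ rather than as the underlying data $\bm x$, after which the standard marginal-plus-conditional factorization of the pair $(g(\bm x), f(\bm x))$ furnishes a legitimate classic wiretap channel, and the equivalence follows directly from the definition.
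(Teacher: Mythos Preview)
Your construction is essentially the paper's: set $\bm u = g(\bm x)$, $\bm v = f(\bm x)$ (with Bob's side $\bm w$ arbitrary), so that the classic-wiretap pair $(\bm u,\bm v)$ has the same law as $(g(\bm x),f(\bm x))$. The paper does exactly this in one line, simply declaring $\bm u=g(\bm x)$ and $\bm v=f(\bm x)$ on the same probability space rather than writing out the marginal/conditional factorization as you do.

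One terminological slip is worth flagging. In the paper's reading of the diagrams and of the definition of equivalence, Eve's \emph{input} is what enters the Eve box (namely $f(\bm x)$ in the computational channel and $\bm v$ in the classic one) and her \emph{output} is her estimate ($\widehat{g(x)}$, respectively $\widehat u$). You instead call Alice's message $\bm u$ ``Eve's input'' and the observation $\bm v$ ``her output.'' This does not damage the argument, because once $(\bm u,\bm v)\sim (g(\bm x),f(\bm x))$ the posteriors agree and hence the deterministic estimators $\widehat u(\cdot)$ and $\widehat g(\cdot)$ coincide, so the joint law of $(\bm v,\widehat u)$ matches that of $(f(\bm x),\widehat{g(x)})$ as required; but you should state that last step rather than stop at the match of $(\bm u,\bm v)$ with $(g(\bm x),f(\bm x))$.
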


\begin{proof}
Consider the computational wiretap channel in Figure \ref{fig:comp}. Let $\bm{u}=g(\bm{x})$ and $\bm{v}=f(\bm{x})$ with probability distributions induced by $\bm{x}$. Let these define the variables in Figure \ref{fig:genwire}, the random variable $\bm{w}$ being immaterial (for definiteness set $\bm{w}=\bm{v}$). Then, in both channels, Eve's input is $f(\bm{x})=\bm{v}$ and her output is $\widehat{g(x)}=\widehat{u}$.
\end{proof}

Theorem \ref{teo:genequi} shows that although there is no noise in the computational wiretap channel, the function $f$ can be interpreted as a noisy version of $g$. We now show conditions on $f$ and $g$ under which $g$ can be retreived exactly from $f$.

To every input $f(x)$ received by Eve, there corresponds an estimate $\widehat{g(x)}$. By defining $\widehat{g}:Y \rightarrow Z$ as $\widehat{g} (f(x)) = \widehat{g(x)}$ we have the diagram in Figure \ref{fig:diagram}.

\begin{figure}[h] 
\center
\begin{tikzpicture}
  \matrix (m) [matrix of math nodes,row sep=3em,column sep=4em,minimum width=2em]
  {
     X & Y \\
      & Z \\};
  \path[-stealth]
    (m-1-1) edge node [above] {$f$} (m-1-2)
    (m-1-2) edge node [right] {$\widehat{g}$} (m-2-2)
    (m-1-1) edge node [below] {$g$} (m-2-2);
\end{tikzpicture}
\caption{Diagram for the functions $f$, $g$, and $\widehat{g}$.}\label{fig:diagram}
\end{figure}
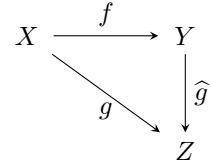

The diagram commutes if and only if the estimate is always correct, i.e. if for every $x \in X$ it follows that $\widehat{g(x)} = g(x)$. This occurs, for example, if $f$ is injective. In this case, $f$ has a left inverse $f^{-1}$ and by taking $\widehat{g} = g \circ f^{-1}$ it follows that $\widehat{g(x)} = (\widehat{g} \circ f) (x) = (g \circ f^{-1} \circ f) (x) = g(x)$ for every $x \in X$. 

The following result completely characterizes the commutativity of the diagram.

\begin{proposition} \label{pro:diacomm}
The diagram in Figure \ref{fig:diagram} commutes if and only if $f(x) = f(x')$ implies that $g(x)=g(x')$.
\end{proposition}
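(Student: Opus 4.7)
The statement is a straightforward equivalence about the well-definedness of a function on the image of $f$, so the plan is to split into the two implications and handle the easy direction purely formally, while being slightly careful about the MAP decoder on the harder direction.

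For the forward direction, I would assume the diagram commutes, i.e.\ $\widehat{g}\circ f = g$ pointwise, and then, given any $x,x'\in X$ with $f(x)=f(x')$, simply chain the equalities
\[ g(x) = \widehat{g}(f(x)) = \widehat{g}(f(x')) = g(x'), \]
which uses only that $\widehat{g}$ is itself a function.

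For the reverse direction, I would unfold the definition of $\widehat{g}$ as Eve's MAP estimate: $\widehat{g}(y) = \argmax_{z\in Z}\Pr(\bm{z}=z\mid \bm{y}=y)$, where $\bm{y}=f(\bm{x})$ and $\bm{z}=g(\bm{x})$. Fix any $x\in X$ with $\Pr(\bm{y}=f(x))>0$ (which holds since $x$ itself lies in the fibre), and set $y=f(x)$. The hypothesis $f(x)=f(x')\Rightarrow g(x)=g(x')$ means that $g$ is constant on the fibre $f^{-1}(y)$, with constant value $g(x)$. Consequently
\[ \Pr(\bm{z}=g(x)\mid \bm{y}=y)=\Pr(g(\bm{x})=g(x)\mid f(\bm{x})=y)=1, \]
so $g(x)$ is the unique maximiser and $\widehat{g}(f(x))=g(x)$.

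There is no real obstacle, but the one subtle point worth flagging is the $\argmax$ being set-valued if there were ties; the conditional probability being exactly $1$ rules this out, so the decoder is unambiguous. Points $x$ outside the support of $\bm{x}$ (if any) can be handled either by noting that $f(x)=f(x)$ forces the value, or by restricting attention to the support, for which the statement of commutativity is the only meaningful one.
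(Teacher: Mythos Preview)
Your proof is correct and follows essentially the same route as the paper: the forward direction is verbatim the chain of equalities, and for the converse both you and the paper observe that $g$ is constant on each fibre $f^{-1}(y)$, forcing $\widehat{g}(y)$ to equal that constant. You are simply more explicit than the paper in unpacking why this constant is the MAP output (conditional probability $1$, hence no ties), which is a welcome clarification.
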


\begin{proof}
Suppose that the diagram commutes. Let $x,x' \in X$ be such that $f(x) = f(x')$. Then, \[ g(x) = (\widehat{g} \circ f)(x) = (\widehat{g} \circ f)(x') = g(x')\]

For the converse, suppose that $f(x) = f(x')$ implies that $g(x)=g(x')$ and let $y \in Y$. The fiber of $y$ by $f$ is the set $(f^{-1}(y) = \{x \in X : f(x) = y\}$. It follows from our hypothesis that the image  $g\circ f^{-1}(y)$ has a single element $z \in Z$. Thus  $\widehat{g}(y) = z$.
\end{proof}

Proposition \ref{pro:diacomm} can be restated as follows.

\begin{corollary}
A computational wiretap channel is equivalent to a noiseless wiretap channel if and only if, for every $x \in X$, $f(x) = f(x')$ implies that $g(x)=g(x')$.
\end{corollary}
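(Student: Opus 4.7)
The plan is to deduce this corollary directly from Proposition \ref{pro:diacomm} by unpacking the phrase ``equivalent to a noiseless wiretap channel.'' First I would formalize what a noiseless wiretap channel means in the setting of Figure \ref{fig:genwire}: it is one in which Eve's channel output $v$ determines her channel input $u$, so that her MAP estimate satisfies $\widehat{u}=u$ with probability one. Composing this with the identification given in the proof of Theorem \ref{teo:genequi}, where $\bm{u}=g(\bm{x})$ and $\bm{v}=f(\bm{x})$, equivalence to a noiseless wiretap channel amounts precisely to the requirement that $\widehat{g(x)}=g(x)$ for every $x$ in the support of $\bm{x}$.

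Next I would recall the construction $\widehat{g}:Y\rightarrow Z$, $\widehat{g}(f(x)):=\widehat{g(x)}$, introduced just before Proposition \ref{pro:diacomm}. Under this construction, the condition $\widehat{g(x)}=g(x)$ for all $x\in X$ is, by definition, the statement $\widehat{g}\circ f = g$, i.e.\ commutativity of the diagram in Figure \ref{fig:diagram}. Thus the equivalence claim in the corollary is logically identical to the commutativity claim in Proposition \ref{pro:diacomm}, and one direction of the corollary follows from each direction of the proposition.

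The only mild subtlety -- what I expect to be the main (and quite small) obstacle -- is that $\widehat{g}$ is only determined by Eve's decoding rule on the image $f(X)\subseteq Y$, whereas the diagram in Figure \ref{fig:diagram} is a diagram of functions defined on all of $Y$. I would address this by noting that the commutativity condition $\widehat{g}\circ f = g$ only constrains $\widehat{g}$ on $f(X)$, so extending $\widehat{g}$ arbitrarily outside $f(X)$ has no effect on either the channel equivalence or the diagram; and the condition ``$f(x)=f(x')\Rightarrow g(x)=g(x')$'' is insensitive to this extension as well.

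With these reductions in place, the corollary is a one-line invocation of Proposition \ref{pro:diacomm}: the computational wiretap channel is equivalent to a noiseless wiretap channel $\Longleftrightarrow$ the diagram of Figure \ref{fig:diagram} commutes $\Longleftrightarrow$ $f(x)=f(x')$ implies $g(x)=g(x')$ for all $x,x'\in X$.
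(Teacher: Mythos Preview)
Your proposal is correct and follows the paper's own approach: the paper presents this corollary with no separate proof, simply introducing it as a restatement of Proposition~\ref{pro:diacomm}, and your argument carries out exactly that restatement explicitly (unpacking ``noiseless'' via the identification in Theorem~\ref{teo:genequi} and reducing to diagram commutativity). The subtlety you flag about the domain of $\widehat{g}$ is handled correctly and is not addressed in the paper.
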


\section{Real-Valued Boolean Functions} \label{sec:additive}

In this section, we show that when the functions $f,g: \{-1,1\}^n \rightarrow \mathbb{R}$ are real-valued Boolean functions with low influence, the computational wiretap channel, in Figure \ref{fig:comp}, can be approximated, from Eve's point of view, to an additive wiretap channel, as in Figure \ref{fig:addit}. 

We begin by showing a formal equivalence between these two channels, analogous to Theorem \ref{teo:genequi}.

\begin{theorem} \label{teo:additive equi}
Every computational wiretap channel is equivalent, from Eve's point of view, to an additive wiretap channel.
\end{theorem}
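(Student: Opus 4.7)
The plan is to adapt the argument of Theorem \ref{teo:genequi} to the special additive structure of Figure \ref{fig:addit}. Given a computational wiretap channel with data $\bm{x}$ and functions $f, g: X \to \mathbb{R}$, I will construct an additive wiretap channel in which Eve's received signal and her estimate have exactly the same joint distribution as in the computational channel. Since the notion of equivalence only involves Eve's input and output distributions, this suffices.

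The construction is direct. Let $\bm{u} = g(\bm{x})$ be Alice's message, and let $N = f(\bm{x}) - g(\bm{x})$, where $N$ is now a real-valued random variable jointly distributed with $\bm{u}$ through $\bm{x}$. Then the channel output $\bm{u} + N = g(\bm{x}) + (f(\bm{x}) - g(\bm{x})) = f(\bm{x})$, which matches what Eve receives in the computational wiretap. I would let Bob's received signal coincide with Eve's (as in the proof of Theorem \ref{teo:genequi}, this is immaterial to the statement). Eve's MAP estimate $\widehat{u} = \argmax_{u} \Pr(\bm{u} = u \mid \bm{u} + N = f(x))$ then coincides with $\argmax_{u} \Pr(g(\bm{x}) = u \mid f(\bm{x}) = f(x)) = \widehat{g(x)}$ by construction, so the joint distribution of (Eve's input, Eve's output) agrees with that of the computational channel.

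The only subtle point is that the noise $N$ is generally not independent of the message $\bm{u}$, so the additive channel built here is not of the usual ``signal plus independent noise'' form that one would call an additive wiretap channel in information theory. However, the definition of equivalence from Eve's point of view in this paper is only about marginals and joint laws at Eve, and does not require independence of $\bm{u}$ and $N$. So this step does not constitute a real obstacle; it simply records that every computational wiretap admits an additive representation with (possibly correlated) noise. The genuine work — showing that under low-influence hypotheses this correlated noise can be replaced by a Gaussian so that Eve effectively sees a classical additive Gaussian wiretap — is deferred to the subsequent Theorem \ref{teo:additive invariance}, where the Basic Invariance Principle (Theorem \ref{teo:invprince}) and Lemmas \ref{lem:var difference} and \ref{lem:influence diff} come into play.
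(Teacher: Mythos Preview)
Your proposal is correct and matches the paper's own proof essentially verbatim: set $\bm{u}=g(\bm{x})$ and $N(\bm{x})=f(\bm{x})-g(\bm{x})$, so that Eve's input $\bm{u}+N=f(\bm{x})$ and output $\widehat{u}=\widehat{g(x)}$ coincide with those in the computational channel. Your added remarks about the possible dependence of $N$ on $\bm{u}$ and the role of Theorem~\ref{teo:additive invariance} are accurate contextual observations but are not needed for the formal equivalence claimed here.
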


\begin{proof}
Consider the computational wiretap channel in Figure \ref{fig:comp}. Let $\bm{u}=g(\bm{x})$ and $N(\bm{x}) = f(\bm{x}) - g(\bm{x})$ define the variables in Figure \ref{fig:addit}. Then, in both channels, Eve's input is $f(\bm{x})=u(\bm{x})+N(\bm{x})$ and her output is $\widehat{g(x)}=\widehat{u}$.
\end{proof}

The noise, $N(\bm{x})$, of the additive wiretap channel, in Theorem \ref{teo:additive equi}, depends on the data $\bm{x}$ and the functions $f$ and $g$. In the next theorem, we show that if $\bm{x}$ is a well behaved random variable and $f$ and $g$ are low influence functions, i.e. their values do not depend too much on any coordinate, then $N(\bm{x})$ can be approximated by some noise, $N(\bm{g})$, which only depends on the functions $f$ and $g$ and is independent of $\bm{x}$.

\begin{restate} \label{teo:additive invariance}
Let $\bm{x} \sim \{-1 , 1\}^n$ satisfy Hypothesis \ref{hyp:var} and $\bm{g} = (\bm{g}_1, \ldots, \bm{g}_n)$ be such that each $\bm{g}_i$ is a standard Gaussian. Let $f,g:\{-1,1\}^n \rightarrow \mathbb{R}$ be of degree $k_1$ and $k_2$ with both $\Inf_t [f]$ and $\Inf_t [g]$ smaller than $\epsilon$, for every $t \in [n]$, and both  $\Var[f]$ and $\Var[g]$ smaller than $1/4$. Assume $\psi: \mathbb{R} \rightarrow \mathbb{R}$ is $\mathcal{C}^4$ with $||\psi''''||_\infty \leq C$. Then the noise $N=f-g$ satisfies
\[ | E[\psi(N(\bm{x}))] - E[\psi(N(\bm{g}))] | \leq \frac{C}{3}  k 9^k  \epsilon \]
where $k=k_1 k_2$.
\end{restate}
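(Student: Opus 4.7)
The plan is to apply Corollary \ref{cor:bip} to the multilinear polynomial $F = f - g$, using the three lemmas from Section \ref{sec:analysisboolfun} to verify its hypotheses (low influence, bounded variance, bounded degree). All of the heavy lifting has already been done, so the argument amounts to assembly.

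First I would observe that $N = f - g$, being a difference of multilinear polynomials on $\{-1,1\}^n$, is itself multilinear of degree at most $\max(k_1, k_2)$, and hence of degree at most $k = k_1 k_2$ (the looser product bound being what the theorem statement uses). Next I would verify the two quantitative hypotheses needed for Corollary \ref{cor:bip}: by Lemma \ref{lem:var difference}, the assumption $\Var[f], \Var[g] \leq 1/4$ gives $\Var[N] \leq 1$; and by Lemma \ref{lem:influence diff}, the assumption $\Inf_t[f], \Inf_t[g] \leq \epsilon$ gives $\Inf_t[N] \leq 4\epsilon$ for every $t \in [n]$.

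With these ingredients in hand, I would invoke Corollary \ref{cor:bip} applied to $F = N$, with test function $\psi$, degree bound $k$, and influence bound $4\epsilon$, to obtain
\[ | E[\psi(N(\bm{x}))] - E[\psi(N(\bm{g}))] | \leq \frac{C}{12}\, k\, 9^k\, (4 \epsilon) = \frac{C}{3}\, k\, 9^k\, \epsilon, \]
which is exactly the stated bound. The hypothesis that $\bm{x}$ and the Gaussian vector $\bm{g}$ each satisfy Hypothesis \ref{hyp:var} is what licenses the use of Corollary \ref{cor:bip}, and the $\mathcal{C}^4$ assumption on $\psi$ with $\|\psi''''\|_\infty \leq C$ provides the smoothness constant needed in the invariance principle.

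There is no real obstacle in this proof; the only mildly delicate step is being explicit that passing from $f$ and $g$ to $N = f-g$ preserves the structural properties (multilinearity, low degree, low influence, bounded variance) required by the basic invariance principle, and that the factor $4$ from Lemma \ref{lem:influence diff} is what converts the coefficient $C/12$ in Corollary \ref{cor:bip} into the coefficient $C/3$ appearing in the theorem.
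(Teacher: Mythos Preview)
Your proposal is correct and follows essentially the same route as the paper: apply Lemma~\ref{lem:var difference} to bound $\Var[N]$, Lemma~\ref{lem:influence diff} to bound $\Inf_t[N]$ by $4\epsilon$, and then invoke Corollary~\ref{cor:bip}, absorbing the factor $4$ into the constant to turn $C/12$ into $C/3$. Your write-up is in fact more explicit than the paper's (e.g.\ spelling out the degree bound $\max(k_1,k_2)\le k_1k_2$ and the arithmetic on the constant), and only two of the three lemmas you mention in your preamble are actually needed---Lemma~\ref{lem: influence mult} is reserved for the multiplicative case.
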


\begin{proof}
Since $\Var[f]$ and $\Var[g]$ are smaller than $1/4$, Lemma \ref{lem:var difference} implies in ${\Var[N] \leq 1}$. Since $\Inf_t [f]$ and $\Inf_t [g]$ are smaller than $\epsilon$, Lemma \ref{lem:influence diff} implies in $\Inf_t [N]\leq 4 \epsilon$. Our result then follows from Corollary \ref{cor:bip}. 
\end{proof}

\begin{example}
Consider the computational wiretap in Figure \ref{fig:comp} where $f,g: \{1,-1 \}^n \rightarrow \mathbb{R}$ are such that
\[ f(x) = \frac{1}{n} \sum_{i=1}^{n-1} x_i x_{i+1} \hspace{5pt} \text{and} \hspace{5pt} g(x) = \frac{1}{n} \sum_{i=1}^{n} x_i .\]

By Theorem \ref{teo:additive equi}, this computational wiretap channel is equivalent, from Eve's point of view, to the additive wiretap channel in Figure \ref{fig:addit} with noise $N(\bm{x}) = f(\bm{x}) - g(\bm{x})$. 

This noise depends not only on $f$ and $g$ but also on $\bm{x}$. Using Theorem $\ref{teo:additive invariance}$ we can approximate this noise by $N(\bm{g})$ which is independent of $\bm{x}$. 

One can check that both $\Inf_t [f]$ and $\Inf_t [g]$ are smaller than $2/n^2$, both $Var[f]$ and $Var[g]$ are smaller than $1/n$, and $k=2$. Thus,
\[ | E[\psi(N(\bm{x}))] - E[\psi(N(\bm{g}))] | \leq \frac{108 C}{n^2} .\]
\end{example}

\section{Boolean Functions} \label{sec:multiplicative}

In this section, we show that when the functions $f,g: \{-1,1\}^n \rightarrow \{-1,1\}$ are Boolean functions with low influence, the computational wiretap channel, in Figure \ref{fig:comp}, can be approximated, from Eve's point of view, to a multiplicative wiretap channel, as in Figure \ref{fig:mult}. 

We begin by showing a formal equivalence between these two channels, analogous to Theorems \ref{teo:genequi} and \ref{teo:additive equi}.

\begin{theorem} \label{teo:mult equivalence}
Every computational wiretap channel is equivalent, from Eve's point of view, to a multiplicative wiretap channel.
\end{theorem}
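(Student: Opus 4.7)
The plan is to mirror the proofs of Theorems \ref{teo:genequi} and \ref{teo:additive equi}, replacing the additive structure with the multiplicative group structure of $\{-1,1\}$. Starting from the computational wiretap channel in Figure \ref{fig:comp}, I would set $\bm{u} = g(\bm{x})$ to be Eve's target message, and define the multiplicative noise by $N(\bm{x}) = f(\bm{x}) g(\bm{x})$. Since $f, g$ take values in $\{-1,1\}$, the product $N(\bm{x})$ also lies in $\{-1,1\}$, so it is a valid noise variable for the multiplicative wiretap channel of Figure \ref{fig:mult}.

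The crucial identity is that every element of $\{-1,1\}$ is its own multiplicative inverse, i.e. $g(\bm{x})^2 = 1$. This immediately yields
\[ u(\bm{x}) \cdot N(\bm{x}) = g(\bm{x}) \cdot f(\bm{x}) g(\bm{x}) = f(\bm{x}) g(\bm{x})^2 = f(\bm{x}), \]
so Eve's input in the constructed multiplicative wiretap channel coincides exactly with her input $f(\bm{x})$ in the computational channel. Her estimate $\widehat{u}$ in Figure \ref{fig:mult} is then by construction the same function of her observation as $\widehat{g(x)}$ in Figure \ref{fig:comp}, so the joint distributions on Eve's input and output are identical in the two channels, which is precisely the definition of equivalence from Eve's point of view.

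I do not expect any serious obstacle: the proof is a one-line translation of Theorem \ref{teo:additive equi} from the additive group $(\mathbb{R}, +)$ to the multiplicative group $(\{-1,1\}, \cdot)$. The only thing to be careful about is that the chosen $N$ really does land in the correct alphabet $\{-1,1\}$ so that the diagram in Figure \ref{fig:mult} is well-formed; this is exactly what defining $N$ as $fg$ (equivalently, $f/g$ in this alphabet) ensures. No invariance principle, variance, or influence bound is needed for this formal equivalence; those tools enter only later in Theorem \ref{teo: mult invariance}, when one wants to approximate $N(\bm{x})$ by a noise independent of $\bm{x}$.
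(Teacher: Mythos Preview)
Your proposal is correct and matches the paper's proof exactly: set $\bm{u}=g(\bm{x})$, $N(\bm{x})=f(\bm{x})g(\bm{x})$, and use $g(\bm{x})^2=1$ to conclude $u(\bm{x})N(\bm{x})=f(\bm{x})$, so Eve's input/output distributions coincide. The paper's version is just the terse form of what you wrote.
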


\begin{proof}
Consider the computational wiretap channel in Figure \ref{fig:comp}. Let $\bm{u}=g(\bm{x})$ and $N(\bm{x}) = f(\bm{x})g(\bm{x})$ define the variables in Figure \ref{fig:mult}. Then, in both channels, Eve's input is ${f(\bm{x})=u(\bm{x})N(\bm{x})}$ and her output is $\widehat{g(x)}=\widehat{u}$.
\end{proof}

\begin{remark}
The multiplicative wiretap channel in Figure \ref{fig:addit} is equivalent to a binary asymmetric channel with probability $\Pr(\bm{N}=-1|\bm{uN}=-1)$ that a $1$ is flipped to a $-1$ and $\Pr(\bm{N}=-1|\bm{uN}=1)$ that a $-1$ is flipped to a $1$.
\end{remark}

Analogous to Theorem \ref{teo:additive equi}, the noise, $\bm{N}$, of the multiplicative wiretap channel, in Theorem \ref{teo:mult equivalence}, depends on the data $\bm{x}$ and the functions $f$ and $g$. In the next theorem, we show that if $\bm{x}$ is a well behaved random variable and $f$ and $g$ are low influence functions, i.e. their values do not depend too much on any coordinate, then $N(\bm{x})$ can be approximated by some noise, $N(\bm{g})$, which only depends on the functions $f$ and $g$ and is independent of $\bm{x}$.

\begin{restate} \label{teo: mult invariance}
Let $\bm{x} \sim \{-1 , 1\}^n$ satisfy Hypothesis \ref{hyp:var} and $\bm{g} = (\bm{g}_1, \ldots, \bm{g}_n)$ be such that each $\bm{g}_i$ is a standard Gaussian. Let $f,g:\{-1,1\}^n \rightarrow \{-1,1\}$ be of degree $k_1$ with $l_1$ terms and $k_2$ with $l_2$ terms with both $\Inf_t [f]$ and $\Inf_t [g]$ smaller than $\epsilon$, for every $t \in [n]$. Assume $\psi: \mathbb{R} \rightarrow \mathbb{R}$ is $\mathcal{C}^4$ with $||\psi''''||_\infty \leq C$. Then the noise $N=fg$ satisfies
\[ | E[\psi(N(\bm{x}))] - E[\psi(N(\bm{g}))] | \leq \frac{C}{3}  kl 9^k  \epsilon \]
where $k=k_1 k_2$ and $l = l_1 l_2$.
\end{restate}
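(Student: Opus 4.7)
The plan is to mirror the proof of Theorem \ref{teo:additive invariance}, replacing the additive noise $f-g$ by the multiplicative noise $fg$ and swapping Lemma \ref{lem:influence diff} for its product analogue, Lemma \ref{lem: influence mult}. The bound will again come from feeding the variance, per-coordinate influence, and degree of $N = fg$ into Corollary \ref{cor:bip}, so the whole argument is a three-ingredient assembly.

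For the variance, observe that since $f$ and $g$ take values in $\{-1,1\}$, so does $N = fg$; hence $N^2 \equiv 1$ and $\Var[N] = 1 - E[N]^2 \leq 1$ automatically. This is why no hypothesis analogous to $\Var[f], \Var[g] \leq 1/4$, which was required in Theorem \ref{teo:additive invariance}, appears in the multiplicative statement. For the per-coordinate influence, Lemma \ref{lem: influence mult} immediately gives $\Inf_t[N] \leq 4\epsilon l$ with $l = l_1 l_2$, uniformly in $t \in [n]$. Treating $fg$ as a multilinear polynomial of degree at most $k = k_1 k_2$ and invoking Corollary \ref{cor:bip}, one obtains
\[ \bigl| E[\psi(N(\bm{x}))] - E[\psi(N(\bm{g}))] \bigr| \leq \frac{C}{12} k\, 9^k \cdot 4\epsilon l = \frac{C}{3} k l\, 9^k\, \epsilon, \]
which is exactly the target inequality.

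The main technical obstacle is already absorbed into Lemma \ref{lem: influence mult}: controlling $\Inf_t[fg]$ requires splitting each factor along $x_t$ via $f = x_t q^f + r^f$ and $g = x_t q^g + r^g$, expanding the product, and applying Cauchy--Schwarz to the cross terms $x_t q^f r^g$ and $x_t q^g r^f$; this expansion is precisely where the term-count factor $l = l_1 l_2$ enters. Once that lemma is granted, the present theorem involves no new analytic content beyond the additive case. The conceptual point worth emphasizing is that the multiplicative structure produces a term-count penalty $l$ that did not arise in Theorem \ref{teo:additive invariance}: replacing a subtraction by a product can mix all pairs of Fourier coefficients, so the bound must pay for them.
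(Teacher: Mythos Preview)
Your proposal is correct and follows exactly the paper's own proof: note that $N=fg$ is Boolean so $\Var[N]\leq 1$, apply Lemma \ref{lem: influence mult} to get $\Inf_t[N]\leq 4\epsilon l$, and plug into Corollary \ref{cor:bip}. The additional commentary you give on why no variance hypothesis is needed and where the factor $l$ originates is accurate and matches the paper's setup.
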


\begin{proof}
Every Boolean function has $\Var[N] \leq 1$. Since $\Inf_t [f]$ and $\Inf_t [g]$ are smaller than $\epsilon$, Lemma \ref{lem: influence mult} implies in $\Inf_t [N]\leq 4\epsilon l$. Our result then follows from Corollary \ref{cor:bip}. 
\end{proof}

\begin{example} \label{ex:BAC}
Consider the computational wiretap in Figure \ref{fig:comp} where $f,g: \{1,-1 \}^3 \rightarrow \{1,-1 \}$ are such that ${f(x) = x_1 x_2 x_3}$ and
\begin{multline*}
  g(x) =\frac{1}{4} ( 1 - x_1 - x_2 - x_3 + x_1 x_2 + x_1 x_3  \\ +x_2 x_3 + 3 x_1 x_2 x_3 ).
\end{multline*}
Then, by Theorem \ref{teo:mult equivalence}, the computational wiretap is equivalent to the multiplicative wiretap channel in Figure \ref{fig:mult} with noise $N(\bm{x})=f(\bm{x})g(\bm{x})$. This is equivalent to the binary asymmetric channel, in this case a Z-channel, in Figure \ref{fig:BAC}.
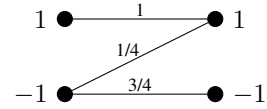
\begin{figure}[h]
\center
\tikzstyle{int}=[draw,shape=circle,fill=black,minimum size=0.2cm,inner sep=0pt]
\tikzstyle{init} = [pin edge={to-,thin,black}]

\begin{tikzpicture}[node distance=1cm,auto,>=latex']

	\node [int,label=left:$1$] (a)  {};
    \node [int,label=right:$1$] (b) [right of=a, node distance=2cm]  {};
    \node [int,label=left:$-1$] (c) [below of=a] {};
    \node [int,label=right:$-1$] (d) [below of=b]  {};
    
    \path[-] (a) edge node [inner sep=1pt] {\scriptsize{1}} (b);
    \path[-] (c) edge node [inner sep=0pt] {\scriptsize{1/4}} (b);
    \path[-] (c) edge node [inner sep=1pt] {\scriptsize{3/4}} (d);
    
\end{tikzpicture}
\caption{Binary asymmetric channel for Example \ref{ex:BAC}.} \label{fig:BAC}
\end{figure}

This noise depends not only on $f$ and $g$ but also on $\bm{x}$. Using Theorem $\ref{teo: mult invariance}$, we can approximate this noise by $N(\bm{g})$ which is independent of $\bm{x}$. 

One can check that both $\Inf_t [f]$ and $\Inf_t [g]$ are smaller than $1$, $k=9$ and $l=8$. Thus,
\[ | E[\psi(N(\bm{x}))] - E[\psi(N(\bm{g}))] | < 10^5 C .\]

\end{example}

\end{document}